\def\L{{\cal L}}
\title{Sampling Large Data on Graphs}
\renewcommand{\S}{{\mathcal S}}
\newcommand{\N}{{\mathcal N}}
\newcommand{\R}{{\mathbb R}}
\newcommand{\W}{{\mathcal W}}
\renewcommand{\u}{{\bf u}}
\newcommand{\f}{{\bf f}}
\newcommand{\g}{{\bf g}}
\newcommand{\h}{{\bf h}}
\renewcommand{\L}{{\mathcal L}}
\renewcommand{\sp}{{\rm span}}
\newcommand{\non}{\nonumber \\}
\newcommand{\defi}{\triangleq}
\newcounter{constcount}
\newcommand{\eref}[1]{(\ref{#1})}
\newcounter{numcount}
\newcounter{thmcnt}
  \let\Oldsection\section
\renewcommand{\section}{\stepcounter{thmcnt}\Oldsection}
\newtheorem{theorem}{Theorem} 
\newtheorem{lemma}{Lemma} 
\newtheorem{definition}{Definition} 
\newtheorem{cor}{Corollary} 
\newtheorem{conj}{Conjecture} 
\newtheorem{remark}{Remark}[section] 
\newtheorem{question}{Question}[section]
\newcounter{examplecounter}
\newcommand{\aln}[1]{\begin{align*}#1\end{align*}}
\newcommand{\al}[1]{\begin{align}#1\end{align}}
\def\Item$#1${\item $\displaystyle#1$
   \hfill\refstepcounter{equation}(\theequation)}
\newcommand{\bea}{\begin{eqnarray}}
\newcommand{\eea}{\end{eqnarray}}
\newcommand{\beas}{\begin{eqnarray*}}
\newcommand{\eeas}{\end{eqnarray*}}
\begin{document}
\ninept
\maketitle
%
%
%
\begin{abstract}
We consider the problem of sampling from data defined on the nodes of a weighted graph, where the edge weights capture the data correlation structure. As shown recently, using spectral graph theory one can define a cut-off frequency for the bandlimited graph signals that can be reconstructed from a given set of samples (i.e., graph nodes). In this work, we show how this cut-off frequency can be computed exactly. Using this characterization, we provide efficient algorithms for finding the subset of nodes of a given size with the largest cut-off frequency and for finding the smallest subset of nodes with a given cut-off frequency. In addition, we study the performance of random uniform sampling when compared to the centralized optimal sampling provided by the proposed algorithms.
\end{abstract}
\begin{keywords}
Sampling, Graph signal processing, cut-off frequency, spectral graph theory
\end{keywords}

\section{Introduction} \label{intro}

Graphs arise as a natural way to represent large datasets obtained in many practical contexts, such as social, biological, and sensor networks \cite{WeberRobust,GirvanCommunity,ZhuSensor}. 
For a graph $G = (V,E)$, the data can be embedded as scalar or vector-valued labels on the vertices $v \in V$, while the weights $w_e$ of the edges $e \in E$ represent some underlying structure in the data.
As an example, one can think of a graph where each vertex corresponds to a different movie title, and the edge weights represent a measure of similarity between the movies.
In this case, the graph data can be the ratings given by a person to each movie title, and one would expect movies connected by edges with large weights to be given similar scores.

Particularly in big data scenarios, a natural question is how well a given sample of the data points can be used to estimate the remainder of the data.
In other words, is it possible to predict the data point at one vertex by interpolating the data from another set of points?
In the context of the movie ratings data, this can be viewed as the celebrated ``Netflix'' challenge \cite{Netflix}, or more in general as data prediction problems for recommendation systems.
Other applications include 
semi-supervised learning of categorized data \cite{ZhuSemisupervised} and ranking problems \cite{HocheRanking}. 

Intuitively, the reason why this graph data interpolation should be at all possible is that the graph contains information about the underlying data structure; thus, a set of samples together with the graph edge weights should reveal information about the missing data points.
As pointed out in \cite{OrtegaInterpolation}, this can be viewed as assuming that the graph data is \emph{slow-varying} or \emph{smooth} on the graph.
Therefore, analogous to the classical signal processing domain, where a smooth signal (i.e., a signal with a small bandwidth) can be recovered from a small set of samples, smoother \emph{graph signals} should have a higher degree of redundancy in their data, and should be recoverable from a smaller set of samples.
These ideas are part of what motivates the emerging field of signal processing on graphs \cite{OrtegaGraphSP} and, in particular, the graph data sampling theory \cite{OrtegaSampling}.

Classical sampling theory states that a signal with bandwidth $W$ can be recovered if we sample at a rate $2W$.
Therefore, given a sampling rate, one can compute the \emph{cut-off frequency}; i.e., the highest frequency component that a given signal may have so that it is recoverable from the samples, which is known as the Nyquist frequency.
In \cite{OrtegaSampling}, 
the authors seek a similar characterization in the context of graph signals, by using tools from spectral graph theory.
The notion of frequency is introduced via the eigenvalues and eigenvectors of the graph Laplacian.
In order to obtain a sampling theorem for graph signals, they consider two questions:
%
%
%
%
%
%
What is the maximum possible
bandwidth (the cut-off frequency) of a graph signal such that it can be recovered from a given subset of nodes, and conversely, what
is the smallest possible subset of nodes that allows the correct recovery of all signals up to a given bandwidth?

Several works prior to  \cite{OrtegaSampling} already dealt with these questions to some extent.
For example, in \cite{OrtegaDownsampling}, the cut-off frequency is established for bipartite graphs.
For arbitrary graphs, sufficient conditions for unique recoverability from a sampling set are stated in \cite{Pesenson}, and then used to derive a lower bound on the cut-off frequency in \cite{OrtegaInterpolation}.
In \cite{OrtegaSampling}, the authors make significant progress towards establishing a sampling theory for graph signals.
They 
present linear-algebraic necessary and sufficient conditions for a given set of samples to correctly recover signals up to a given bandwidth, which is then used to obtain an increasing sequence of lower bounds on the cut-off frequency of a given sampling set.
The drawback of such a characterization is that it is unclear in general 
whether this method can indeed provide arbitrarily close approximations to the cut-off frequency and, if so, how far in the sequence of lower bounds one needs to go.

In this work, we show that the linear-algebraic conditions from \cite{OrtegaSampling} can be used in a different way, which yields an exact characterization of the cut-off frequency.
This is done in Section~\ref{cutoffsec}.
Then, in Section~\ref{algorithmssec}, we show that this characterization can be used to provide efficient algorithms for finding optimal sampling sets, in two senses.
First, what is the subset of nodes of a given size with the largest cut-off frequency?
Second, what is the smallest subset of nodes with a given cut-off frequency?
  In addition, in Section~\ref{uniformsec}, we study the performance of random uniform sampling when compared to the centralized optimal sampling provided by the proposed algorithms.

\section{Notation and Background}

In this section, we introduce the notation and basic notions of spectral graph theory we will need.
We let $G = (V,E)$ be a simple, undirected graph with $|V| = n$ nodes, and we assign a non-negative weight $w_{i,j}$ to each $(i,j) \in E$.
The degree $d_i$ of a node $i \in V$ is given by $d_i = \sum_{j : (i,j) \in E} w_{i,j}$, and we let $D$ be an $n \times n$ diagonal matrix with $d_i$ as the $(i,i)$ entry.
The adjacency matrix $W$ of the graph is an $n \times n$ matrix with $w_{i,j}$ as the $(i,j)$ entry, and we define the Laplacian matrix as $L = D - W$.
We will also be interested in the normalized adjacency and Laplacian matrices, given by
$\W = D^{-1/2}W D^{-1/2}$ and $\L = D^{-1/2} L D^{-1/2}$ respectively.
Both $L$ and $\L$ are symmetric positive semi-definite matrices, and $\L$ has eigenvalues  $\lambda_1,...,\lambda_n$ that in addition satisfy $0 = \lambda_1 \leq \lambda_2 \leq ... \leq \lambda_n \leq 2$. 
We also let $\{\u_1,\u_2,...,\u_n\}$ be the set of orthonormal eigenvectors of $\L$, and $U$ be an $n \times n$ matrix whose $i$th column is $\u_i$.
We will use $\S$ to denote a subset of the nodes in the graph, and $\S^c = V - \S$ to denote the remaining nodes.
A graph signal is a function $f : V \to \R$, 
which can also viewed as a vector  $\f \in \R^n$ with components indexed by the nodes in $V$.
In addition, we let $\f(\S)$ be the vector in $\R^{|\S|}$ with components $\f(i)$, $i\in \S$.

The eigenvalues and eigenvectors of $\L$ (or $L$) can be interpreted as defining a frequency domain for graph signals on $G$.
Analogous to the classical signal processing setting where the Fourier transform converts a time-signal into the frequency domain, the graph Fourier transform (GFT) converts a graph signal $f$ into the basis $\{\u_1,\u_2,...,\u_n\}$.
More precisely, we let $\tilde \f = U^T \f$ be the GFT of $\f$.
It is known that the eigenvalues indeed provide an intuitive notion of frequencies for the graph signal, where the eigenvectors are the corresponding eigenfunctions.
In fact, a higher eigenvalue corresponds to an eigenvector that, when seen as a graph signal on $G$, presents a faster variation across the edges, or is less smooth \cite{OrtegaGraphSP}.
Therefore, it makes sense to define the bandwidth of a graph signal $\f$ to be the largest eigenvalue $\lambda_i$ for which the component of $\f$ along $\u_i$ is nonzero (i.e., $\f^T \u_i \ne 0$).
We define the Paley-Wiener space as
\al{
PW_\omega(G) = \sp ( \u_i : \lambda_i \leq \omega );
}
i.e., the subspace of $\R^n$ with all $\omega$-bandlimited signals.

\section{Characterizing the Cut-off Frequency} \label{cutoffsec}

In order to define the cut-off frequency of set $\S$, we first need to define the concept of a uniqueness set.
Intuitively, $\S$ should be a uniqueness set for some set $A \subset \R^n$ if, from the samples in $\S$, one can correctly reconstruct all graph signals in $A$.
More precisely, we use the following definition from \cite{Pesenson,OrtegaSampling}:

\begin{definition} \label{uniquenessdef}
A set $\S \subset V$ is called a uniqueness set for $A \subset \R^n$ if, for any $\f,\g \in A$, $\f(\S) = \g(\S)$ implies $\f = \g$.
\end{definition}

We can now define the cut-off frequency.

\begin{definition} \label{cutoffdef}
The cut-off frequency $\omega_c(\S)$ of a set $\S$ is the largest $\omega$ such that $\S$ is a uniqueness set for $PW_\omega(G)$.
\end{definition}

One of the contributions of \cite{OrtegaSampling} is the characterization of when $\S$ is a uniqueness set for $PW_\omega (G)$, or more in general for a linear space $M$.
Let $L_2(\S^c)$ be the space of all vectors in $\R^n$ that are zero at all components corresponding to nodes in $\S$.
In \cite{OrtegaSampling}, the following lemma is proved (for the case where $M = PW_\omega(G)$).

\begin{lemma} \label{uniquenesslem}
$\S$ is a uniqueness set for a linear space $M \subset \R^n$ if and only if 
\al{M \cap L_2(\S^c) = \{\bf 0\}. \label{lem1eq}}
\end{lemma}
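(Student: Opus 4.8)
The plan is to prove both implications directly from the definitions, using only two facts: that $M$ is closed under subtraction (being a linear subspace) and that the sampling map $\f \mapsto \f(\S)$ is linear. So this should be a short, purely linear-algebraic argument with no genuine obstacle; the only thing to be careful about is matching up the definitions correctly.

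First I would prove the ``only if'' direction. Assume $\S$ is a uniqueness set for $M$, and let $\h \in M \cap L_2(\S^c)$. Since $\h \in L_2(\S^c)$, every component of $\h$ indexed by a node of $\S$ vanishes, i.e. $\h(\S) = \mathbf 0 = \mathbf 0(\S)$; moreover $\mathbf 0 \in M$ because $M$ is a subspace. Applying Definition~\ref{uniquenessdef} with $\f = \h$ and $\g = \mathbf 0$ yields $\h = \mathbf 0$, hence $M \cap L_2(\S^c) = \{\mathbf 0\}$.

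For the ``if'' direction, assume $M \cap L_2(\S^c) = \{\mathbf 0\}$ and take $\f,\g \in M$ with $\f(\S) = \g(\S)$. The difference $\f - \g$ again lies in $M$, and $(\f-\g)(\S) = \f(\S) - \g(\S) = \mathbf 0$, so $\f - \g \in L_2(\S^c)$. Therefore $\f - \g \in M \cap L_2(\S^c) = \{\mathbf 0\}$, which forces $\f = \g$. By Definition~\ref{uniquenessdef}, $\S$ is a uniqueness set for $M$.

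The only point worth flagging is that linearity of $M$ is used in an essential way in both directions (to take the zero vector in $M$, and to take $\f-\g$ in $M$); the equivalence would in general fail for an arbitrary set $A \subset \R^n$, which is precisely why the statement is phrased for linear spaces. In fact the same argument shows, more generally, that for any $M$ closed under subtraction the condition \eqref{lem1eq} is equivalent to injectivity of the restriction to $M$ of the sampling map onto the coordinates in $\S$.
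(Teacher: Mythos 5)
Your proof is correct and follows essentially the same route as the paper's: the ``if'' direction is identical (consider $\f-\g \in M \cap L_2(\S^c)$), and your ``only if'' direction is just the direct form of the paper's contrapositive argument, comparing $\h$ against $\mathbf 0$ instead of comparing $\f$ against $\f+\h$. No gaps; your closing remark about where linearity of $M$ is used is a fair observation but not a deviation in method.
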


\begin{proof}
Suppose that $\h \in M \cap L_2(\S^c)$ with $\h \ne {\bf 0}$. 
Then, for any $\f \in M - \{{\bf 0}\}$, we have $\g = \f + \h \in M$.
But this implies that $\f(\S) = \g(\S)$ and $\f \ne \g$.
By Definition~\ref{uniquenessdef}, $\S$ is not a uniqueness set for $M$.
Conversely, suppose $M \cap L_2(\S^c) = \{\bf 0\}$.
Take any $\f,\g \in M$ with $\f(\S) = \g(\S)$.
Then we must have $\f(\S) - \g(\S) = {\bf 0}$, and $\f - \g \in M \cap L_2(\S^c)$, implying that $\f = \g$.
\end{proof}

In \cite{OrtegaSampling}, the authors utilize the characterization of a uniqueness set given by Lemma~\ref{uniquenesslem} to estimate the cut-off frequency of a set $\S$.
More precisely, they show that $\S$ is uniqueness set for $PW_{\omega}(G)$ for any 
$\omega \leq \Omega_k \defi (\sigma_{1,k})^{1/k}$,
where $\sigma_{1,k}$ denotes the smallest eigenvalue of the reduced matrix $(\L^k)_{\S^c}$, obtained by restricting $\L^k$ to the rows and columns corresponding to nodes in $\S^c$.
Since, as shown in \cite{OrtegaSampling}, $(\sigma_{1,k})^{1/k}$ is increasing in $k$, it provides an increasing sequence of lower bounds on the cut-off frequency $\omega_c(\S)$.   

As it turns out, Lemma~\ref{uniquenesslem} can be used in a different way in order to characterize $\omega_c(\S)$ exactly.
Notice that, from \eref{lem1eq}, $\S$ is a uniqueness set for $PW_\omega(G)$ if and only if $PW_\omega(G) \cap L_2(\S^c) = \{\bf 0\}$.
Now, since $L_2(\S^c) = \sp\{ {\bf e}_j : j \in \S^c\}$, where ${\bf e}_j$ is the $j$th standard basis vector, characterizing the largest $\lambda_i$ for which \eref{lem1eq} holds with $M= PW_{\lambda_i}(G)$ can be done by simply testing, for $i=1,...,n$, whether 
\al{ \label{interseceq1}
\sp (\u_1,...,\u_i) \cap \sp( {\bf e}_j : j \in \S^c ) = \{\bf 0\}.
}
This can in fact be done easily for each $i$ by noticing that
\al{
\dim & \left( \sp (\u_1,...,\u_i) \cap \sp( {\bf e}_j : j \in \S^c ) \right) \non
& = i + |\S^c| - \dim \sp(\u_1,...,\u_i, {\bf e}_j : j \in \S^c )  \nonumber \\
& = \dim  \N [ \u_1,...,\u_i, {\bf e}_j : j \in \S^c ], 
\label{dimeq}
}
which implies that \eref{interseceq1} holds if and only if the matrix $[ \u_1,...,\u_i, {\bf e}_j : j \in \S^c ]$ is full column rank.
Therefore, the cut-off frequency $\omega_c(\S)$ can be calculated exactly as described above and we have the following result:

\begin{theorem} \label{cutoffthm}
For a graph $G$ with normalized Laplacian $\L$ with eigenvalues $0 = \lambda_1 \leq \lambda_2 \leq ... \leq \lambda_n$ and corresponding eigenvectors $\u_1,...,\u_n$, 
the cut-off frequency of a subset of nodes $\S$ is given by
\aln{
\omega_c (\S) = \max \left\{ \lambda_i : \dim \N [ \u_1,...,\u_i, {\bf e}_j : j \in \S^c ] = 0 \right\}.
}
Hence $\S$ is a uniqueness set for $PW_\omega(G)$ if and only if $\omega \leq \omega_c(\S)$.
\end{theorem}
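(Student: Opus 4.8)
The plan is to obtain Theorem~\ref{cutoffthm} by combining Lemma~\ref{uniquenesslem} with the observation that the Paley--Wiener spaces form a short nested chain: the cut-off frequency is then pinned down by finitely many subspace-intersection tests, each of which reduces to a rank computation. In detail, by Definition~\ref{cutoffdef} and Lemma~\ref{uniquenesslem}, $\S$ is a uniqueness set for $PW_\omega(G)$ iff $PW_\omega(G)\cap L_2(\S^c)=\{{\bf 0}\}$. Since $PW_\omega(G)=\sp(\u_i:\lambda_i\le\omega)$ depends on $\omega$ only through which eigenvalues lie below it, it is a step function of $\omega$ whose only values are $\sp(\u_1,\dots,\u_i)$, $i=0,\dots,n$, constant on each $[\lambda_i,\lambda_{i+1})$; so it suffices to decide, for each $i$, whether $\sp(\u_1,\dots,\u_i)\cap L_2(\S^c)=\{{\bf 0}\}$. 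Writing $L_2(\S^c)=\sp({\bf e}_j:j\in\S^c)$ and applying $\dim(A+B)=\dim A+\dim B-\dim(A\cap B)$ with $A=\sp(\u_1,\dots,\u_i)$ (of dimension $i$, the $\u$'s being orthonormal) and $B=L_2(\S^c)$ (of dimension $|\S^c|$), this intersection is trivial exactly when the $n\times(i+|\S^c|)$ matrix $[\u_1,\dots,\u_i, {\bf e}_j : j\in\S^c]$ has full column rank, i.e.\ $\dim\N[\u_1,\dots,\u_i, {\bf e}_j : j\in\S^c]=0$ --- which is the content of \eref{dimeq}.

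Next I would turn this per-index criterion into the stated formula, and here monotonicity is the key. If $\omega\le\omega'$ then $PW_\omega(G)\subseteq PW_{\omega'}(G)$, so a uniqueness set for $PW_{\omega'}(G)$ is automatically one for $PW_\omega(G)$; equivalently, the predicate $\dim\N[\u_1,\dots,\u_i, {\bf e}_j : j\in\S^c]=0$ is non-increasing in $i$, so the indices for which it holds form an initial segment $\{1,\dots,k^\star\}$ --- nonempty whenever $\S\neq\emptyset$, since $\u_1\propto D^{1/2}{\bf 1}$ has no vanishing coordinate on a connected graph. Therefore $\max\{\lambda_i:\dim\N[\u_1,\dots,\u_i, {\bf e}_j : j\in\S^c]=0\}=\lambda_{k^\star}$; and since the family of ``good'' $\omega$ is downward closed and $PW_\omega(G)$ is constant on $[\lambda_{k^\star},\lambda_{k^\star+1})$, $\S$ is a uniqueness set for $PW_\omega(G)$ precisely when $\omega\le\lambda_{k^\star}$. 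This yields both $\omega_c(\S)=\lambda_{k^\star}$ and the concluding ``if and only if''.

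The step I expect to need genuine care is the bookkeeping around repeated Laplacian eigenvalues. When $\lambda_i=\lambda_{i+1}$ one has $PW_{\lambda_i}(G)=\sp(\u_1,\dots,\u_{i+1})\supsetneq\sp(\u_1,\dots,\u_i)$, so the running index $i$ in the argument above should really be taken to be the \emph{last} index $m(i)=\max\{j:\lambda_j=\lambda_i\}$ attaining the value $\lambda_i$, and one must then verify that the ``$\max$'' and the final ``$\omega\le\omega_c(\S)$'' still point to the correct member of the Paley--Wiener chain. I would dispose of this either by carrying $m(i)$ in place of $i$ throughout (the argument is otherwise unchanged) or, more simply, by assuming the eigenvalues of $\L$ are distinct --- the generic case, and all that is used in the algorithmic sections that follow --- in which case the statement holds verbatim. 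Everything else is routine linear algebra.
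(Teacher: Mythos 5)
Your proof is correct and follows essentially the same route as the paper: Lemma~\ref{uniquenesslem} reduces uniqueness to $PW_\omega(G)\cap L_2(\S^c)=\{\bf 0\}$, and the dimension identity \eref{dimeq} turns each such test into a full-column-rank condition on $[\u_1,\dots,\u_i,{\bf e}_j: j\in\S^c]$; your added monotonicity observation (the good indices form an initial segment) is implicit in the paper's argument. Your caveat about repeated eigenvalues is a genuine refinement rather than a flaw in your write-up: since $PW_{\lambda_i}(G)=\sp(\u_1,\dots,\u_{m(i)})$ with $m(i)=\max\{j:\lambda_j=\lambda_i\}$, the paper's formula as literally stated can overestimate $\omega_c(\S)$ when the test passes at $i$ but fails at $m(i)>i$, and your fix (indexing by $m(i)$, or assuming distinct eigenvalues) is exactly what is needed to make the statement airtight.
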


The advantage of computing the cut-off frequency using Theorem~\ref{cutoffthm} in comparison to the previously known estimate is illustrated in Fig.~\ref{cutofffig}.
We randomly generated a $300$-node graph by adding each edge with probability $0.4$ and choosing the weight of each existing edge independently and uniformly at random from $(0,1)$.
We then selected a set $\S$ with $30$ nodes at random, and compared $\omega_c(\S)$  to the lower bound given by $\Omega_k = (\sigma_{1,k})^{1/k}$ for increasing values of $k$.
As shown in Fig.~\ref{cutofffig1}, the lower bound does seem to converge to $\omega_c(\S)$ but it seems to require large values of $k$ to be arbitrarily close.
\begin{figure}[ht] 
     \centering
          \subfigure[]{
       \includegraphics[trim=3.7cm 7.3cm 3.9cm 7.5cm,clip=true,width=0.465\linewidth]{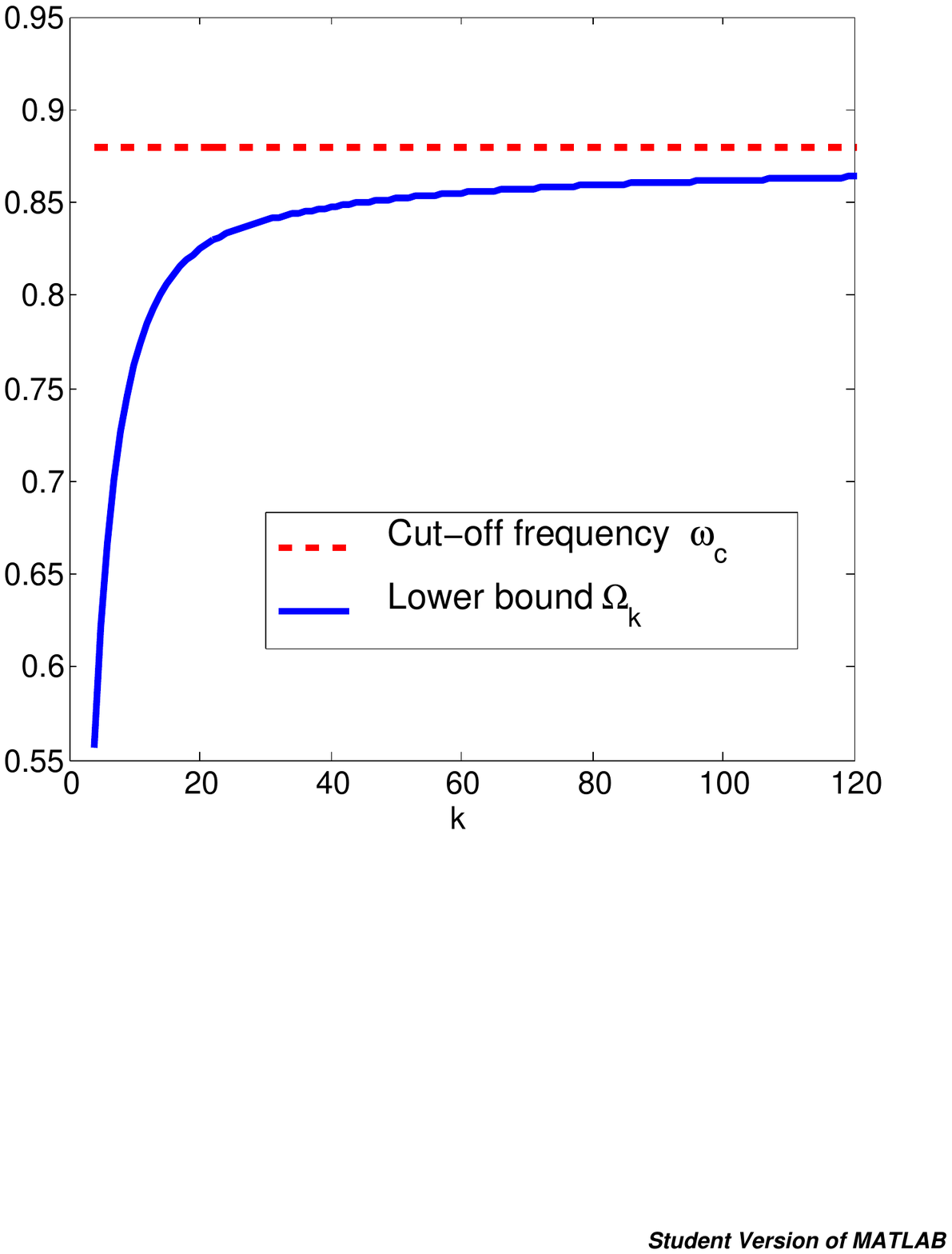}\label{cutofffig1}}
       \hspace{0mm}
                \subfigure[]{
       \includegraphics[trim=3.7cm 7.3cm 3.9cm 7.5cm,clip=true,width=0.465\linewidth]{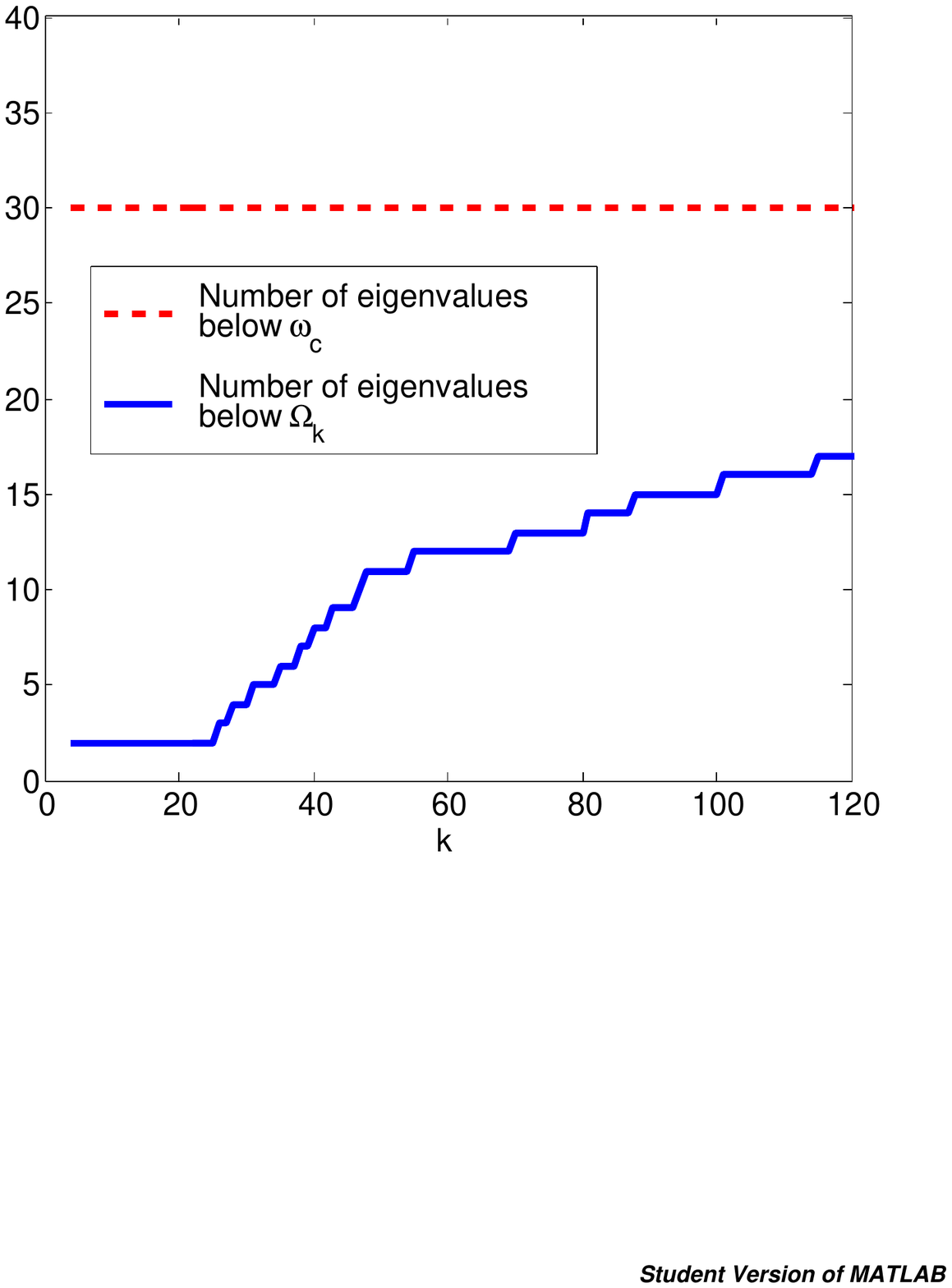} \label{cutofffig2}} 
        \caption{(a) Cut-off frequency for a random set $\S$ and the previously known lower bound. (b)~Number of eigenvalues below the cut-off frequency $\omega_c(\S)$ and below the lower bound $\Omega_k = (\sigma_{1,k})^{1/k}$.\label{cutofffig}}
\end{figure}
In addition, we point out that more important than the actual value of $\omega_c(\S)$ is the number of eigenvalues of $\L$ below $\omega_c(\S)$.
That corresponds to the dimension of the subspace $PW_{\omega_c(\S)}(G)$, which is the set of graph signals that can be correctly reconstructed from $\S$.
While for $k=120$, the approximation given by $\Omega_k$ to $\omega_c$ seems to be good, as shown in Fig.~\ref{cutofffig2}, it implies that $\S$ can reconstruct signals in a subspace of dimension $17$, as opposed to $30$.
Therefore, if we use the true cut-off frequency value as opposed to its estimate in an interpolation technique such as the one described in \cite{OrtegaInterpolation}, a better prediction of the missing data can be obtained.
Finally, we notice that since $|\S| = 30$, by dimensionality considerations we cannot expect $\S$ to reconstruct signals in a space with dimension larger than $30$.
Hence, $\S$ is optimal in the sense of having maximum cut-off frequency, even though it was chosen at random.
As we discuss in Section~\ref{uniformsec}, this seems to be the expected behavior, provided that the graph is connected.

\section{Finding Optimal Sampling Sets} \label{algorithmssec}

Besides characterizing the cut-off frequency of a set $\S$, the approach from the previous section can be used to answer
 two optimization questions related to
 finding optimal sampling sets.
Notice that finding an optimal sampling set, i.e., a set $\S$ with the highest cut-off frequency  under some constraint, has significant practical relevance, since in big datasets,  we are often interested in finding a small yet representative sampling set.
The following two results and their proofs can be understood as providing approaches to selecting optimal sampling sets from the point of view of their cut-off frequencies.
 
 The first problem we consider is to find, for a given $\omega$, the smallest set $\S$ with $\omega_c (\S) \geq \omega$.
 

\begin{cor} \label{cor1}
For a graph $G$ with normalized Laplacian $\L$ with eigenvalues $0 = \lambda_1 \leq \lambda_2 \leq ... \leq \lambda_n$ and corresponding eigenvectors $\u_1,...,\u_n$, 
the problem
\aln{
\min_{\S} \; |\S| \quad \text{subject to} \,\quad \omega_c(\S) \geq \omega, 
}
can be solved in polynomial time and an optimal $\S$ has size $|\S| = m$, where $\lambda_m$ is the smallest eigenvalue of $\L$ such that $\lambda_m \geq \omega$.
\end{cor}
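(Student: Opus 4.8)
\noindent\emph{Proof proposal.} The plan is to turn the constraint $\omega_c(\S)\ge\omega$ into a rank condition on a fixed $n\times n$ matrix, read off the optimal size by counting columns, and realize it with a greedy linear-algebra construction. Let $\lambda_m$ be the smallest eigenvalue of $\L$ with $\lambda_m\ge\omega$, as in the statement. Since \tref{cutoffthm} expresses $\omega_c(\S)$ as a maximum over the $\lambda_i$, it is always an eigenvalue, so $\omega_c(\S)\ge\omega$ is equivalent to $\omega_c(\S)\ge\lambda_m$; by the last line of \tref{cutoffthm} this is equivalent to $\S$ being a uniqueness set for $PW_{\lambda_m}(G)$, and by Lemma~\ref{uniquenesslem} to $PW_{\lambda_m}(G)\cap L_2(\S^c)=\{\bf 0\}$. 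Relabelling so that $PW_{\lambda_m}(G)=\sp(\u_1,\dots,\u_m)$, the computation in \eref{dimeq} with $i=m$ shows this is in turn equivalent to the $n\times(m+|\S^c|)$ matrix $[\u_1,\dots,\u_m,{\bf e}_j:j\in\S^c]$ having trivial null space, i.e. full column rank.

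Two conclusions follow. First, a full-column-rank matrix cannot have more columns than rows, so $m+|\S^c|\le n$, i.e. $|\S|\ge m$: no feasible set is smaller than $m$. Second, size $m$ is attainable. Let $U_m$ be the $n\times m$ matrix with columns $\u_1,\dots,\u_m$; its columns are orthonormal, hence $\rank U_m=m$ and the rows of $U_m$ span $\R^m$, so there is an $m$-element set $\S\subset V$ indexing an invertible $m\times m$ submatrix $U_m[\S]$. For such an $\S$, permuting rows so that those in $\S$ come first makes $[\u_1,\dots,\u_m,{\bf e}_j:j\in\S^c]$ block lower-triangular with diagonal blocks $U_m[\S]$ and an identity, so its determinant is $\pm\det U_m[\S]\ne 0$; the matrix is full column rank and, unwinding the equivalences, $\omega_c(\S)\ge\omega$ with $|\S|=m$. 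Constructing such an $\S$ is exactly the problem of picking a maximal linearly independent subset of the rows of $U_m$, which a greedy sweep over the rows --- equivalently, Gaussian elimination on $U_m^T$ with column pivoting, recording the pivot columns --- accomplishes in $O(nm^2)$ time; together with the (polynomial-time) eigendecomposition of $\L$, the whole procedure is polynomial.

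The only step beyond bookkeeping is achievability, i.e. tightness of $|\S|\ge m$. It rests on the elementary fact that an $n\times m$ matrix of rank $m$ contains an invertible $m\times m$ submatrix (choose $m$ independent rows), which is at once what makes the greedy row-selection correct and size-optimal; the block-triangular determinant identity is then what transports independence of those rows back, through Lemma~\ref{uniquenesslem} and \tref{cutoffthm}, to the statement $\omega_c(\S)\ge\omega$. The lower bound and the complexity estimate are routine once the problem is cast in this form. (If $\lambda_m$ is a repeated eigenvalue, the same argument goes through with $m$ replaced by $\dim PW_{\lambda_m}(G)$.)
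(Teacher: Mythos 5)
Your proof is correct and follows essentially the same route as the paper: both reduce $\omega_c(\S)\ge\omega$ via Theorem~\ref{cutoffthm} to full column rank of $[\u_1,\dots,\u_m,{\bf e}_j: j\in\S^c]$, get the lower bound $|\S|\ge m$ by counting columns against rows, and achieve it by completing $\{\u_1,\dots,\u_m\}$ to a basis with standard basis vectors --- the paper phrases this as Steinitz exchange, you phrase it as selecting an invertible $m\times m$ row-submatrix of $U_m$ and checking a block-triangular determinant, which is the same construction (and the same greedy procedure as the paper's Algorithm~1). Your parenthetical about repeated eigenvalues is a fair observation that the paper glosses over, but it does not change the argument.
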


\begin{proof}
Let $\lambda_m$ be the smallest eigenvalue of $\L$ such that $\lambda_m \geq \omega$.
Then, since the set $\{\u_1,...,\u_m\}$ is linearly independent, by the Steinitz exchange lemma, we can find vectors ${\bf e}_{j_{1}},...,{\bf e}_{j_{n-m}}$ in the standard basis of $\R^n$ such that $\{\u_1,...,\u_m, {\bf e}_{j_{1}},...,{\bf e}_{j_{n-m}}\}$ is a basis for $\R^n$.
Hence, if we let $\S = \{1,...,n\} - \{j_1,...,j_{n-m}\}$, we have $L_2(\S^c) = \sp({\bf e}_{j_{1}},...,{\bf e}_{j_{n-m}})$, and by following \eref{dimeq},
\aln{
\dim & \left( \sp (\u_1,...,\u_m) \cap \sp( {\bf e}_{j_{1}},...,{\bf e}_{j_{n-m}} ) \right) \non
& = m + |\S^c| - \dim \sp(\u_1,...,\u_m, {\bf e}_{j_{1}},...,{\bf e}_{j_{n-m}} )  \non
& = \dim  \N [ \u_1,...,\u_m, {\bf e}_{j_{1}},...,{\bf e}_{j_{n-m}} ] = 0.
}
From Theorem~\ref{cutoffthm}, we conclude that 
$w_c(\S) \geq \lambda_m \geq \omega$.
Moreover, for any $\S'$ with $|\S'| < m$, we will have $|(\S^{\prime})^c| > n - m$, and we must have
\aln{
 \dim \N [ \u_1,...,\u_m, {\bf e}_j : j \in (\S^{\prime})^c ] \geq 1,
}
and Theorem~\ref{cutoffthm} now implies that $\omega_c(\S') \leq \lambda_{m-1} < \omega$.

Computing a set $\S$ of minimum size satisfying $\omega_c(\S)\geq \omega$ requires first performing the eigendecomposition of $\L$, and then constructing the basis 
$[ \u_1,...,\u_m, {\bf e}_{j_{1}},...,{\bf e}_{j_{n-m}} ]$ as described in Algorithm 1, all of which can be done in polynomial time, since $\L$ is positive semidefinite. 
\end{proof}


\begin{algorithm}[htb] 
\caption{Computing minimal $\S$ with $\omega_c(\S) = \lambda_m$}
\begin{algorithmic}
\STATE $\S \gets \emptyset$
\STATE $[{\bf b}_1 \, \cdots \, {\bf b}_n] \gets [{\bf e}_1 \, \cdots \, {\bf e}_n]$
\FOR {$\u = \u_1,\u_2,...,\u_m$}
	\STATE \text{Write $\u$ as $\u = \sum_{i=1}^n \alpha_i {\bf b}_i$}
	\STATE \text{$\ell \gets \arg \max_{i \notin \S} |\alpha_i|$}
	\STATE \text{$\S \gets \S \cup \{\ell\}$}
	\STATE \text{${\bf b}_\ell \gets \u$}
\ENDFOR
\end{algorithmic}
\end{algorithm}

The second optimization question is to find, for a given size, the sampling set $\S$ with the maximum cut-off frequency.
As it turns out, the same algorithm provides an efficient solution to this problem.

\begin{cor} \label{cor2}
For a graph $G$ with normalized Laplacian $\L$ with eigenvalues $0 = \lambda_1 \leq \lambda_2 \leq ... \leq \lambda_n$ and corresponding eigenvectors $\u_1,...,\u_n$, 
the problem
\aln{
\max_{\S} \; \omega_c(\S) \quad \text{subject to} \, |S| \leq m, 
}
can be solved in polynomial time and the optimal $\S$ has $\omega_c(\S) = \lambda_m$.
\end{cor}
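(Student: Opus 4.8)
The plan is to bound $\max_{|\S| \le m} \omega_c(\S)$ from below and from above, in both cases by $\lambda_m$, using Theorem~\ref{cutoffthm} for both directions. For the lower bound I would invoke the construction already given in Corollary~\ref{cor1} (equivalently, run Algorithm~1 for $m$ iterations) with target $\omega = \lambda_m$: it returns a set $\S$ with $|\S| = m$ and $\dim \N[\u_1,\ldots,\u_m,{\bf e}_j : j \in \S^c] = 0$, so Theorem~\ref{cutoffthm} gives $\omega_c(\S) \ge \lambda_m$. Thus some feasible set achieves cut-off at least $\lambda_m$.

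For the upper bound I would use a pure dimension count valid for every feasible $\S$. Suppose $|\S| \le m$ but $\omega_c(\S) > \lambda_m$. By Theorem~\ref{cutoffthm} there is an index $i$ with $\lambda_i > \lambda_m$, hence $i \ge m+1$, for which $[\u_1,\ldots,\u_i,{\bf e}_j : j \in \S^c]$ has trivial null space, i.e.\ is full column rank. But this is an $n \times (i + |\S^c|)$ matrix with $i + |\S^c| \ge (m+1) + (n-m) = n+1 > n$, so it has more columns than rows and cannot be full column rank, a contradiction. Hence $\omega_c(\S) \le \lambda_m$ for all $\S$ with $|\S| \le m$, and combining with the lower bound the maximum equals $\lambda_m$, attained by the set from Algorithm~1; the running time is polynomial for the same reasons as in Corollary~\ref{cor1} (one eigendecomposition of $\L$ plus $m$ exchange steps).

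I expect no genuinely hard step here, since all the content sits in Theorem~\ref{cutoffthm}; the only thing to handle carefully is the bookkeeping around repeated eigenvalues. Specifically, the upper-bound count must phrase the contradiction in terms of $\lambda_i > \lambda_m$ rather than $i > m$, and one should note that the set produced for the lower bound in fact realizes $\omega_c(\S) = \lambda_m$ exactly, not merely $\omega_c(\S) \ge \lambda_m$, when $\lambda_m$ has multiplicity. Both points are immediate: ties only strengthen the inequality $i + |\S^c| \ge n+1$, and the upper bound already forces $\omega_c(\S) \le \lambda_m$ on the constructed set.
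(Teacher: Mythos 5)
Your proposal is correct and follows essentially the same route as the paper: the lower bound comes from the set constructed in Corollary~\ref{cor1} via Algorithm~1, and the upper bound is the same column-counting argument ($|\S^c| \ge n-m$ forces a nontrivial null space once $i \ge m+1$) that the paper invokes through Theorem~\ref{cutoffthm}. Your extra care about repeated eigenvalues only makes explicit what the paper leaves implicit.
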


\begin{proof}
We know from the previous corollary that we can find a set $\S$ of size $|\S| = m$ and cut-off frequency $\omega_c(\S) = \lambda_m$ using Algorithm 1 in polynomial time.
Moreover, for any $\S$ with $|\S| \leq m$, we have $|\S^c| \geq n-m$, and Theorem~\ref{cutoffthm} implies that 
$\omega_c(\S') \leq \lambda_m$.
\end{proof}

In Fig.~\ref{optsetfig}, we illustrate the application of Algorithm 1 to find the optimal set $\S$ in two scenarios.
First we consider a random graph with $200$ nodes on the plane, where edges are added between nodes whose distance is below a fixed threshold and all edges have weight $1$.
In Fig.~\ref{optsetfig1}, we see the optimal set $\S$ with $|\S| = 25$.
As intuition would suggest, the nodes in $\S$ try to cover the graph evenly, and the number of nodes in each connected component seems proportional to its size.
\begin{figure}[ht] 
     \centering
          \subfigure[]{
       \includegraphics[trim=6.5cm 9.5cm 6.5cm 9.5cm,clip=true,width=0.465\linewidth]{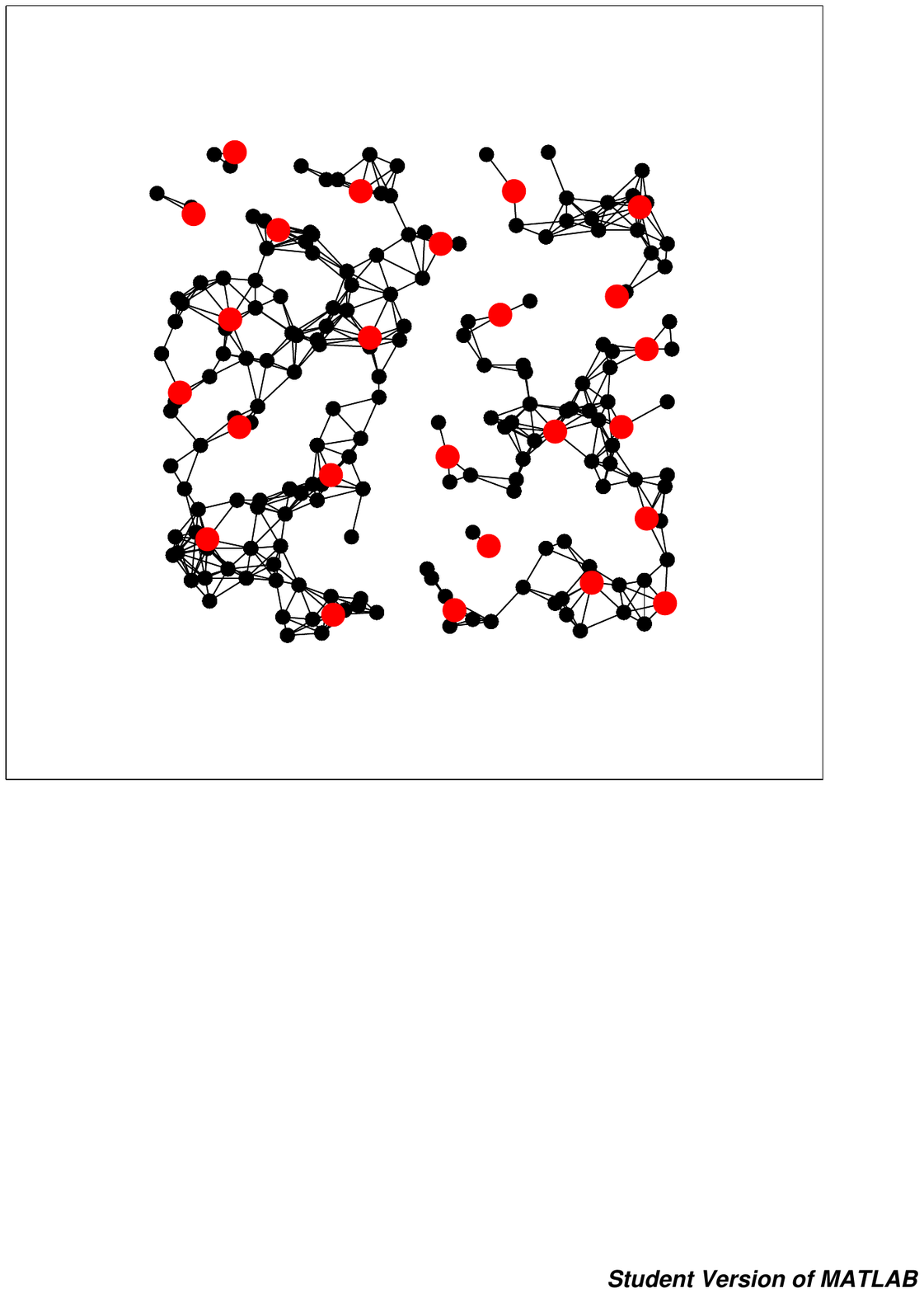}\label{optsetfig1}}
       \hspace{0mm}
                \subfigure[]{
       \includegraphics[trim=5.5cm 8.5cm 5.5cm 8.5cm,clip=true,width=0.465\linewidth]{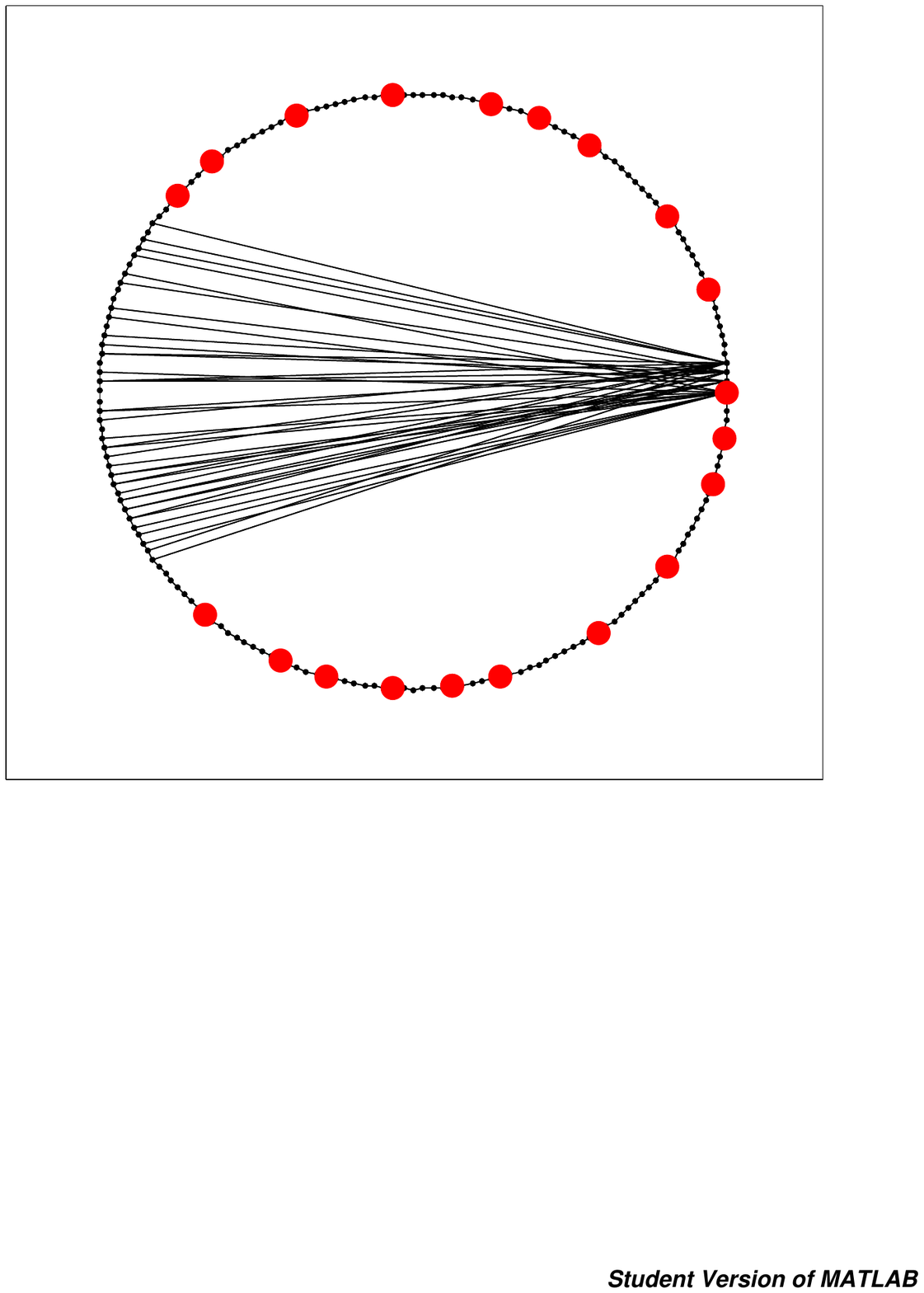} \label{optsetfig2}} 
        \caption{(a) Optimal set $\S$ with $|\S|=25$ (nodes in red) for graph with $200$ nodes. (b)~Optimal set $\S$ with $|\S|=20$ (nodes in red) for $200$-cycle with additional edges.\label{optsetfig}}
\end{figure}
In Fig.~\ref{optsetfig2}, we consider a cycle with $200$ nodes and additional edges connecting a set $A$ of $4$ consecutive nodes to a set $B$ of $40$ consecutive nodes.
We see that an optimal set $\S$ contains one node in $A$ and is essentially evenly distributed over the nodes in $V - B$, since nodes in $B$ are close to the one node chosen from $A$.

\section{Performance of Random Sampling} \label{uniformsec}

As we noticed in Section~\ref{cutoffsec}, for the example illustrated in Fig.~\ref{cutofffig2}, a random set $\S$ of size $|\S| = 30$ has $\omega_c(\S) = \lambda_{30}$.
From Corollary~\ref{cor2}, this is in fact an optimal choice of $\S$ under the constraint $|\S| \leq 30$.
Theorem~\ref{cutoffthm} in fact suggests that this should be the case under fairly general conditions, since it is reasonable that by picking a set of $n-m$ standard basis vectors ${\bf e}_{j_1},...,{\bf e}_{j_{n-m}}$ at random we will have 
\aln{
\dim \N [ \u_1,...,\u_m, {\bf e}_{j_1},...,{\bf e}_{j_{n-m}} ] = 0.
}
Notice however that, if the graph has disconnected components, random sampling may lead to one of the components not being sampled at all and, as illustrated in the example in Fig.~\ref{optsetfig1}, the optimal sampling set tries to keep the number of samples per connected component proportional to the size of the component.
We conjecture the following:

\begin{conj} \label{conj1}
Consider a connected graph $G = (V,E)$ and an arbitrary set $\S \subset V$ with $|\S| = m$.
For almost all assignments of the edge weights, if we let $0 = \lambda_1 \leq ... \leq \lambda_n$ be the eigenvalues of the normalized Laplacian $\L$,
\aln{
\dim \N [ \u_1,...,\u_m, {\bf e}_{j} : j \in \S^c ] = 0,
}
implying that $\omega_c(\S) = \lambda_m$.
\end{conj}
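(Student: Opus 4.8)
The plan is to recast the condition $\dim \N [ \u_1,\dots,\u_m, {\bf e}_j : j \in \S^c ] = 0$ as the non-vanishing of a single scalar function of the edge weights, and then invoke the fact that a nontrivial real-analytic function on a connected open set vanishes only on a set of Lebesgue measure zero. First, collecting the rows indexed by $\S$ in a hypothetical dependence $\sum_{i\le m}c_i\u_i+\sum_{j\in\S^c}b_j{\bf e}_j={\bf 0}$ shows, exactly as in the computation \eref{dimeq}, that the condition is equivalent to nonsingularity of the $m\times m$ matrix $A$ obtained by keeping the rows in $\S$ of $[\u_1\cdots\u_m]$. Since $P_m \defi [\u_1\cdots\u_m][\u_1\cdots\u_m]^{T}$ is the orthogonal projector onto $PW_{\lambda_m}(G)$, we have $(\det A)^{2}=\det\big((P_m)_{\S,\S}\big)$, where $(P_m)_{\S,\S}$ is the principal submatrix on the indices in $\S$. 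Hence it suffices to prove that $f(w)\defi\det\big((P_m)_{\S,\S}\big)$ is nonzero for almost every weight vector $w$; being built from the projector, $f$ is free of the sign and ordering ambiguities of the eigenvectors as long as there is a gap $\lambda_m(w)<\lambda_{m+1}(w)$. (If $G$ forces $\lambda_m=\lambda_{m+1}$ for every $w$ --- as stars do --- the conjecture has to be read as a statement about a generic orthonormal basis of the $\lambda_m$-eigenspace; I set that case aside and assume the gap from here on.)

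Next, on the open set $\Omega=\{w:\lambda_m(w)<\lambda_{m+1}(w)\}$ the projector admits the resolvent representation $P_m(w)=\frac{1}{2\pi i}\oint(zI-\L(w))^{-1}\,dz$ over a contour separating $\lambda_1,\dots,\lambda_m$ from $\lambda_{m+1},\dots,\lambda_n$, so $P_m$ --- hence $f$ --- is real-analytic on $\Omega$. Moreover $\Omega$ has full measure: the set $\{\lambda_m=\lambda_{m+1}\}$ lies in the zero locus of the discriminant in $\lambda$ of the polynomial $\det(\lambda D(w)-L(w))$, which is a polynomial in $w$ and is not identically zero unless the combinatorics of $G$ forces a coincidence at index $m$ (as again happens for stars). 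Granting that $\Omega$ is connected --- which comes down to the coincidence locus having codimension at least two, in the spirit of the von Neumann--Wigner avoided-crossing phenomenon, and could alternatively be obtained by analytically continuing $P_m$ to complexified weights or by reducing to a generic affine line in weight space --- this real-analytic dichotomy reduces the whole conjecture to exhibiting one weighting $w^{\star}\in\Omega$ with $f(w^{\star})\neq 0$.

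Producing that witness is, I expect, the crux, and it is where the hypothesis that $G$ is connected genuinely enters: if $G$ is disconnected one may place $\S$ inside one component and a low-frequency signal on another, which is $\lambda_m$-bandlimited and vanishes on $\S$, so $f\equiv 0$ and no witness exists. For connected $G$ I would anchor the construction at a weighting with a tractable eigenbasis --- for instance concentrating almost all the weight on a spanning path or tree with an explicitly computable eigenbasis, where the eigenvectors are of discrete-cosine type, so that the relevant $m\times m$ minors become products of differences of trigonometric quantities and are manifestly nonzero --- and then argue that $f\neq 0$ survives both the perturbation back to generic weights and the auxiliary choices (which spanning tree, which root) needed to accommodate the specific $\S$. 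A possible alternative is an induction on $n$: remove a leaf from a spanning tree, apply the statement on the smaller graph, and tune the pendant weight so that the newly introduced eigenvector keeps the relevant submatrix nonsingular. The genericity ingredients (full measure and connectedness of $\Omega$) are standard in flavour; the witness construction for an arbitrary connected $G$ and arbitrary $\S$ is the step where I expect the difficulty to concentrate.
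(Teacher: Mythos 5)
First, a point of comparison: the paper does \emph{not} prove this statement. It is stated as Conjecture~\ref{conj1} and supported only by numerical experiments, so there is no proof in the paper to measure yours against; your proposal has to stand on its own, and, as you yourself concede, it is a strategy rather than a proof. The parts you do carry out are sound: the determinant of $[\u_1,\dots,\u_m,{\bf e}_j : j \in \S^c]$ equals, up to sign, the $m\times m$ minor of $[\u_1\cdots\u_m]$ on the rows indexed by $\S$, its square is the principal minor $\det\big((P_m)_{\S,\S}\big)$ of the spectral projector, and the resolvent contour integral does make $P_m$ real-analytic wherever $\lambda_m<\lambda_{m+1}$. But both load-bearing steps are missing. (1) The witness $w^\star$ with $f(w^\star)\neq 0$ for an \emph{arbitrary} connected $G$ and an \emph{arbitrary} $\S$ of size $m$ is essentially the conjecture itself evaluated at one point; the suggestion to concentrate weight on a spanning path reduces it to showing that every $m\times m$ row-submatrix of the first $m$ discrete-cosine eigenvectors of a path is nonsingular, which is a nontrivial claim you neither prove nor cite, and the perturbation back to general weights and the accommodation of the tree/root choices are not argued. (2) Connectedness of $\Omega=\{w:\lambda_m(w)<\lambda_{m+1}(w)\}$ is asserted by analogy with the von Neumann--Wigner codimension-two count, but that count is for the full space of symmetric matrices; restricted to the $|E|$-parameter family $\L(w)$ it can fail, and without connectedness the identity-theorem argument only yields $f\neq 0$ almost everywhere on the component containing the witness.

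Your aside about stars deserves more weight than you give it: it shows the conjecture is \emph{false as stated} for some connected graphs. For $K_{1,n-1}$ the normalized adjacency matrix has rank two, so $\lambda_2=\dots=\lambda_{n-1}=1$ for \emph{every} weight assignment; hence for $2\leq m\leq n-2$ the space $PW_{\lambda_m}(G)$ has dimension $n-1>m=|\S|$, no set of size $m$ can be a uniqueness set for it, and $\omega_c(\S)=\lambda_1<\lambda_m$ identically in $w$. Since the conclusion $\omega_c(\S)=\lambda_m$ is basis-independent, reinterpreting the statement via a generic orthonormal eigenbasis does not rescue it. Any correct treatment must therefore first add a hypothesis (e.g.\ that $\L(w)$ has simple spectrum for generic $w$, or at least that $\lambda_m<\lambda_{m+1}$ generically) and then supply the witness construction; those are precisely the two places where your proposal currently has gaps.
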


Numerical experiments where we assign the weights to the edges of a connected graph at random give strong support for this claim.
If true, this shows that, in terms of the cut-off frequency, sampling uniformly at random from the nodes in a graph is optimal.
From a practical point of view this is significant since it would obviate the need for a centralized algorithm such as Algorithm 1 to determine an optimal sampling set.

Nonetheless, this also shows a drawback of choosing a sampling set solely based on the cut-off frequency.
For example, consider the graph in Fig.~\ref{optsetfig3}. 
The left half of the $100$ nodes is densely connected, while the right half is not.
As intuition suggests, the optimal sampling set of size $|\S| =30$ picks many more points from the right half of the graph.
\begin{figure}[ht] 
     \centering
          \subfigure[]{
       \includegraphics[trim=5.5cm 8.5cm 5.5cm 8.5cm,clip=true,width=0.465\linewidth]{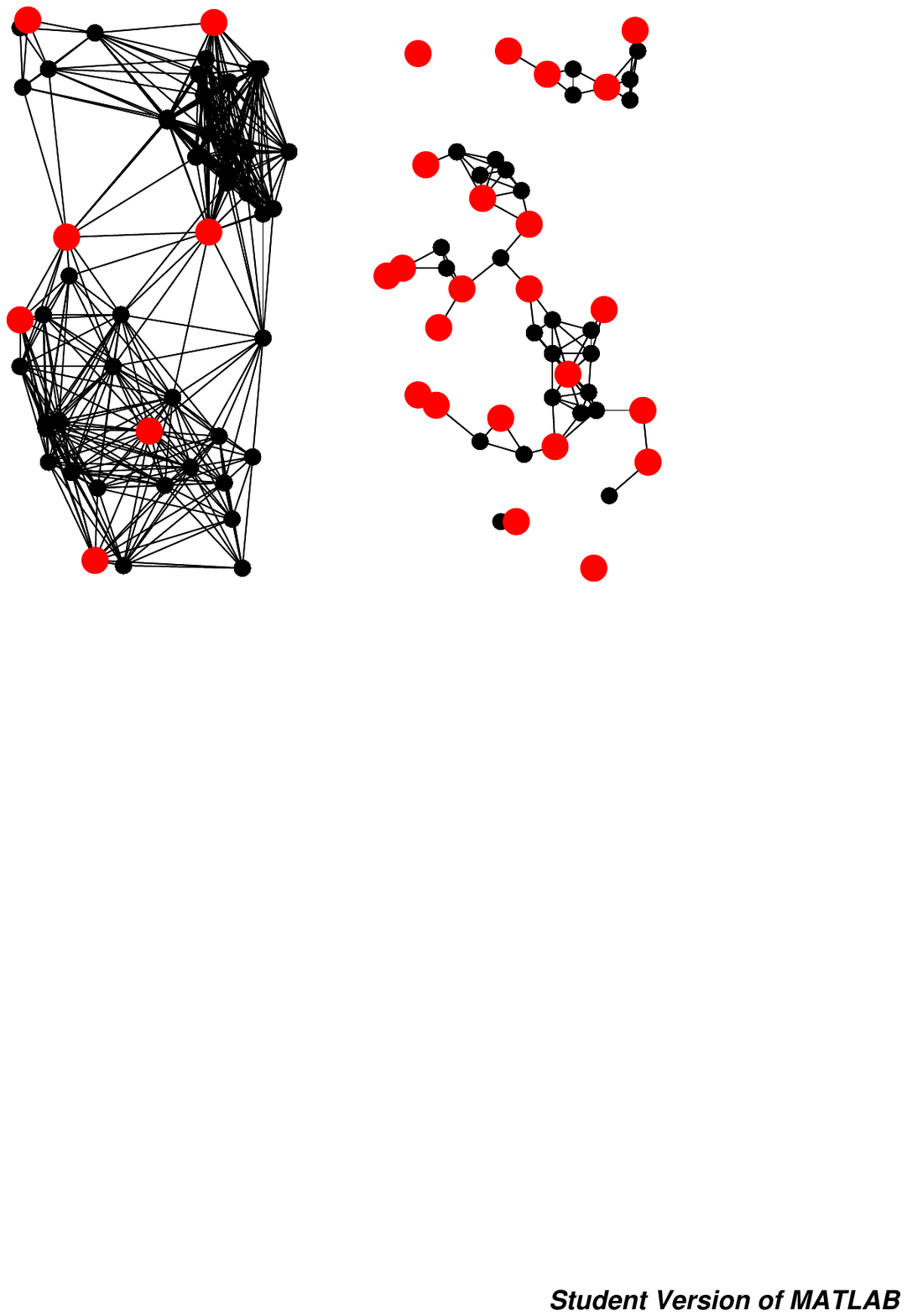}\label{optsetfig3}}
       \hspace{0mm}
                \subfigure[]{
       \includegraphics[trim=5.5cm 8.5cm 5.5cm 8.5cm,clip=true,width=0.465\linewidth]{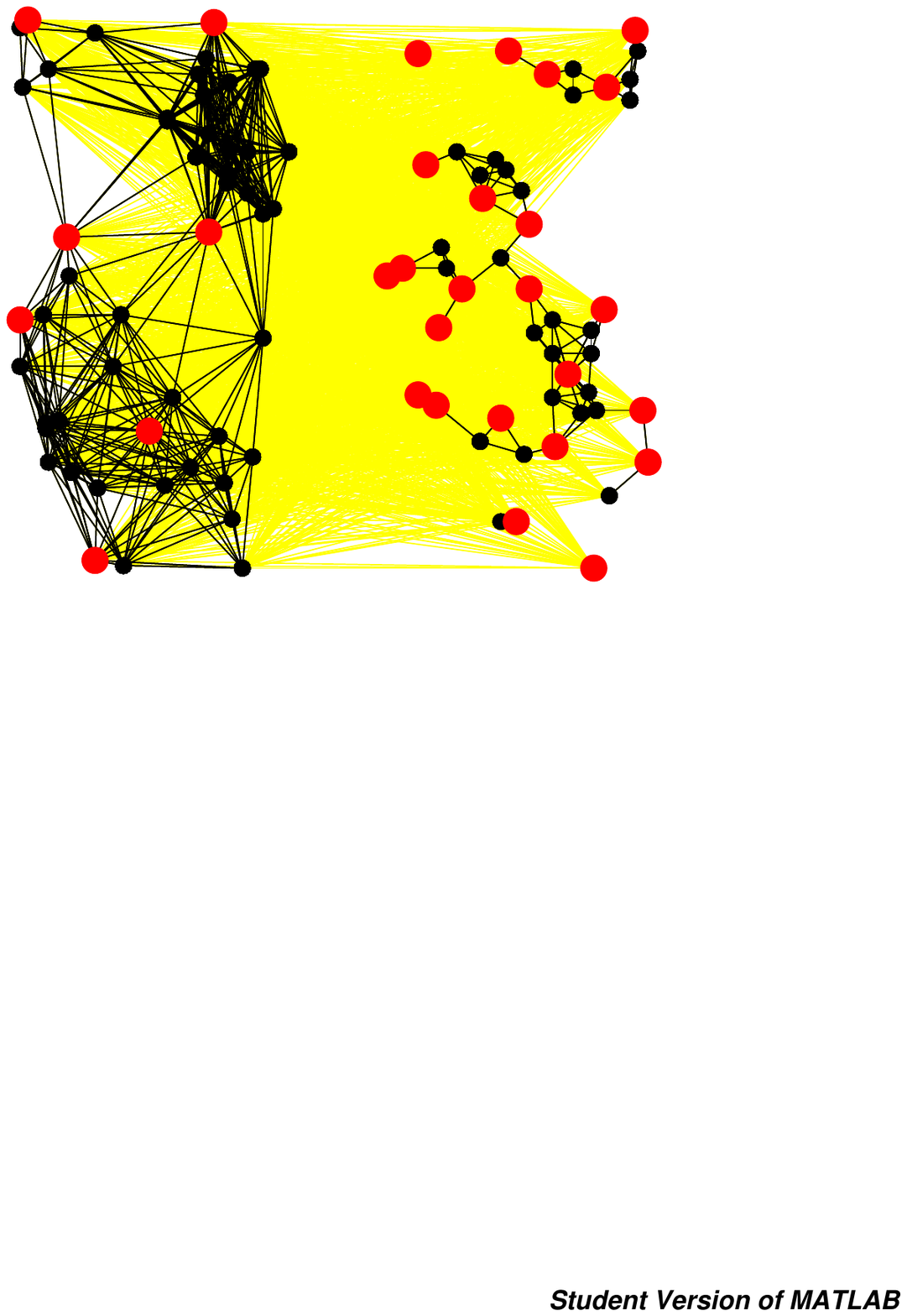} \label{optsetfig4}} 
        \caption{(a) Optimal set $\S$ with $|\S|=30$ (nodes in red) for a graph with $100$ nodes divided into a dense part and a sparse part. (b)~Optimal set $\S$ for the same graph after adding edges of very small weight (shown in yellow) between all pairs of nodes.\label{optsetfig5}}
\end{figure}
In Fig.~\ref{optsetfig4}, we consider adding links of very small weights (shown in yellow) between any two nodes.
For the resulting connected graph, according to Conjecture~\ref{conj1}, any set with $|\S| = 30$ would be optimal from a cut-off frequency point of view.
This suggests that the cut-off frequency is not a robust metric for choosing the best sampling set.
Nonetheless, we point out that in Algorithm 1, by choosing $\ell$ to be $\arg \max_{i\notin \S}|\alpha_i|$ as opposed to any $i$ with $\alpha_i \ne 0$, we try to make sure that the vectors in the resulting basis are ``as orthogonal as possible'' to each other.
This makes the algorithm's output set robust to small variations in the weights and, as shown in Fig.~\ref{optsetfig4}, the optimal sampling set is the same as in Fig.~\ref{optsetfig3}.

\section{Acknowledgements} \label{acksec}

We would like to thank Aamir Anis and Prof.~Antonio Ortega for motivating the problem studied in this paper and for fruitful discussions on the subject.



\vspace{-3mm}

\section{References}
\label{sec:refs}

\vspace{-8mm}

\bibliographystyle{IEEEbib}

\end{document}